\documentclass{ifacconf}

\usepackage{graphicx}      
\usepackage{natbib}        
\usepackage{amsmath}
\usepackage{amsfonts}
\let\theoremstyle\relax
\usepackage{amsthm}
\usepackage{booktabs}
\usepackage{multirow}
\usepackage{tabularx}
\usepackage{mathtools}
\newtheorem{lemma}{Lemma}
\newtheorem{assumption}{Assumption}
\DeclareMathOperator*{\argmin}{\textnormal{argmin}}
\newcommand{\hop}{\mathsf{H}}
\usepackage{etoolbox}
\DeclareMathOperator*{\argmax}{argmax}

\AtBeginEnvironment{proof}{\vspace{-0.8em}}
\AtEndEnvironment{proof}{\vspace{-0.6em}}

\usepackage{eso-pic}
\AddToShipoutPictureBG*{%
\AtPageUpperLeft{%
\setlength\unitlength{1in}%
\hspace*{\dimexpr0.5\paperwidth\relax}
\makebox(0,-1.75)[c]{
\begin{tabular}{c c}
Enrico Dozzi \emph{et al.},
D-Optimized Sampling Design for System Identification, \\
To appear in 23rd IFAC World Congress, Busan, Corea, 2026
uploaded to ArXiv \today \\
\end{tabular}}}}

\begin{document}
\begin{frontmatter}

\title{D-Optimized Sampling Design for System Identification} 

\author[First]{Enrico Dozzi},
\author[First]{Tom Oomen}, 
\author[First]{Rodrigo A. González}

\address[First]{Department of Mechanical Engineering, Eindhoven University of Technology, Eindhoven, The Netherlands.}

\begin{abstract}                
Traditional system identification with multisine inputs relies on uniform sampling and periodic excitation to preserve Fourier orthogonality and avoid spectral leakage, limiting its use in scenarios with irregular sampling or nonperiodic inputs.
This work investigates continuous-time system identification under nonperiodic multisine excitation and nonuniform sampling.
We develop a nonparametric frequency response function estimator suited to such conditions and design irregular sampling schemes that enhance the informativeness of measurements and reduce spectral leakage.
The proposed sampling scheme improve the statistical accuracy of system identification in settings where periodic excitation is impractical.
\end{abstract}

\begin{keyword}
Continuous-Time System Identification, Frequency Response Function, Nonperiodic Multisine Excitation, Nonuniform Sampling, Optimal Sampling Design.
\end{keyword}

\end{frontmatter}

\section{Introduction}
Event-triggered sampling, under which nonuniform sampling naturally arises, can improve resource usage and performance in various control and estimation tasks \citep{astrom2002comparison,miskowicz2007asymptotic}.
However, most system identification theory and algorithms have been developed for discrete-time models assuming uniform sampling \citep{lennart1999system}, largely because such models align directly with digital controllers.
In contrast, continuous-time models remain valid under arbitrary sampling schemes and preserve physical interpretability \citep{rao2006identification,garnier2008direct,gonzalez2025identification}.
As nonuniform sampling becomes increasingly prevalent in modern data-acquisition and control architectures, continuous-time system identification methods capable of handling irregular sampling are essential \citep{mu2015identification}.

The informativeness of the data, meaning the extent to which the collected input–output measurements allow the system to be accurately identified, depends critically on both the excitation signal and the sampling schedule. 
Multisine inputs are appealing due to their sparse spectral content and clear frequency-domain interpretation \citep{schoukens1994advantages,gonzalez2021consistent}, properties that remain intact under nonuniform sampling and make them well suited for identifying continuous-time models from irregular measurements.
However, existing multisine-based identification methods typically rely on periodic excitation and uniform sampling \citep{morelli2003multiple,grauer2020aircraft}. 
In practice, achieving periodicity is often impractical, since the fundamental period ensuring that all frequency components complete an integer number of cycles may be prohibitively large.
Nonperiodic excitation leads to spectral leakage and correlated frequency components \citep{schoukens2006analysis}, while irregular sampling undermines the assumptions behind standard frequency-domain estimators.
Consequently, the design of identification experiments is typically constrained to periodic excitation and uniform sampling.

Classical experiment design theory \citep{pukelsheim2006optimal,pronzato2008optimal} and its extensions to system identification \citep{lennart1999system,bombois2011optimal,hjalmarsson2005experiment} provide systematic tools for shaping excitation signals to maximize informativeness of data.
However, these formulations typically assume fixed sampling instants, most often uniformly spaced, and therefore focus exclusively on input design while leaving the sampling schedule outside the design loop. 
As a consequence, they are well suited for periodic or near-periodic excitations but become insufficient when the excitation is nonperiodic.
In that case, the informativeness of the data can no longer be controlled by input design alone, motivating the introduction of a complementary viewpoint in which the sampling process itself is treated as a design variable to compensate for the loss of periodicity.  

The main contribution of this work is two-fold. 
First, we derive a nonparametric frequency response function (FRF) estimator at the excited frequencies suitable for nonperiodic multisine excitation and arbitrary sampling schemes.
Within a probabilistic framework where sampling instants are modeled as random variables drawn from a tunable distribution, we show that the FRF can be estimated directly using a least-squares procedure and that the estimator is almost surely well-posed for any continuous sampling distribution. 
Furthermore, we establish unbiasedness conditions and derive approximate high-probability covariance bounds that quantify how estimation accuracy depends on the chosen sampling distribution.
Second, we design a sampling strategy that aims to improve estimation accuracy by reducing spectral leakage.
The previous results motivate treating the sampling distribution as a design variable: since well-posedness is guaranteed for any continuous sampling distribution, we optimize over families of sampling patterns rather than individual deterministic sequences, thereby extending classical experiment-design principles to continuous-time system identification with random sampling.  
To exploit this design freedom, we adopt a D-optimality criterion, maximizing the determinant of the expected Fisher Information Matrix (FIM), and show that the resulting sampling strategies enhance informativeness and reduce spectral leakage. 
Finally, we propose an iterative algorithm to construct deterministic irregular sampling sequences that retain the desirable properties predicted by the theoretical analysis while enforcing bounds on the inter-sample time.
The resulting deterministic sampling schemes are thus readily implementable in practical applications.

The remainder of the paper is organized as follows. 
Section~\ref{sec:problem_formulation} introduces the problem formulation.
Section~\ref{sec:non_param_estimation_frf} presents the nonparametric FRF estimator and analyzes its statistical properties. 
Section~\ref{sec:sampling_design_d_optimal} develops the D-optimal sampling design framework. 
Section~\ref{sec:simulation_results} reports Monte Carlo simulation results that compare the proposed methodology to uniform sampling. 
Finally, Section~\ref{sec:conclusion} summarizes the contributions.

\textit{Notation.}
The sets $\mathbb{Z}$, $\mathbb{R}$, $\mathbb{C}$ denote the sets of all integer, real and complex numbers, respectively.
Vectors are considered as column vectors, unless transposed.
Given a vector $v\in\mathbb{R}^n$ (matrix $A\in\mathbb{R}^{n\times m}$), then $x^{\top}$ ($A^{\top})$ represents the vector (matrix) transpose.
The Euclidean norm of the vector $v$ is denoted by \( ||v||_2 \) and the spectral norm of the matrix $A$ is denoted by $||A||_2$.
The notation $A=\textnormal{diag}(A_i) \in \mathbb{R}^{n \times m}$ defines a block diagonal matrix composed of the individual matrices $A_i \in \mathbb{R}^{n_i \times m_i}$ placed along the diagonal, with zeros in the off-diagonal blocks.
The minimum and maximum eigenvalue of $A$ are denoted by $\lambda_{\textnormal{min}}(A)$ and $\lambda_{\textnormal{max}}(A)$, respectively.
The identity matrix of dimension $n\times n$ is indicated by $I_n$.
Operator $\mathrm{p}$ denotes the differentiation operator \(\mathrm{p} := \frac{\text{d}}{\text{d}t}\). The complex conjugate and complex conjugate transpose of a vector $x$ are denoted as $x^*$ and $x^{\hop}$, respectively. 
A property is said to hold almost surely (a.s.) if it holds with probability one.

\section{Problem Formulation}
\label{sec:problem_formulation}
We consider the identification of a continuous-time, single-input–single-output (SISO) linear time-invariant (LTI) system from nonuniformly sampled input–output data. The system dynamics is described by
\begin{equation}
    {x}(t) = G(\mathrm{p}) u(t)
    \label{eq:system_dynamics},
\end{equation}
where $u(t) \in \mathbb{R}$ is the input signal, $x(t) \in \mathbb{R}$ is the noise-free output signal and \(G(\mathrm{p})\) represents the unknown continuous-time operator.
The system is excited by a nonperiodic multisine signal
\begin{equation}
u(t) = a_0+\sum_{m=1}^{M} a_m \cos(\omega_m t + \phi_m),
\label{eq:multisine}
\end{equation}
where \(\{\omega_m\}_{m=1}^M\) are the chosen nonzero excitation frequencies, and \(a_m\), \(\phi_m\) are the corresponding amplitudes and phase shifts.
Since each sinusoid $a_m \cos(\omega_m t + \phi_m)$ excites the system at the pair of frequencies $\pm \omega_m$, the resulting set of excited frequencies is \(\Omega:= \{0\}\cup\{\pm \omega_m\}_{m=1}^M\).

\begin{assumption}
\label{assump:nonzero_amplitudes}
The amplitudes $a_m$ of the multisine input \eqref{eq:multisine} are different from zero for every $m=0,\dots, M$.
\end{assumption}
\begin{assumption}
\label{assump:input_steady_state}
The input signal $u(t)$ can be evaluated at any time instant. 
Measurements are collected after the transient response of the system has decayed, so that $x(t)$ corresponds to the steady-state response of the system.
\end{assumption}

The output is measured at sampling instants $\{t_k\}_{k=1}^N$ and the measured output is given by 
\begin{equation}
    y(t_k)=x(t_k)+e(t_k), \quad k=1,\dots, N,
    \label{eq:noisy_measurement_output}
\end{equation}
with $e(t_k) \in \mathbb{R}$ representing additive measurement noise at time $t_k$.  

\begin{assumption}
\label{assump:noise}
The measurement noise $\{e(t_k)\}_{k=1}^N$ in \eqref{eq:noisy_measurement_output} is independent and identically distributed (i.i.d.), zero-mean with variance $\sigma^2$, and independent of both the input $\{u(t_k)\}_{k=1}^N$ and the sampling instants $\{t_k\}_{k=1}^N$.
\end{assumption}

Given the irregularly sampled dataset \(\{u(t_k),y(t_k)\}_{k=1}^N\), the goal is to estimate the FRF of the system \(G(\mathrm{p})\) at the excited frequencies \(\Omega\) under arbitrary (nonuniform) sampling instants.
In addition, we aim to construct sampling schemes that improve the statistical accuracy of the resulting FRF estimator and are practically implementable.

\section{Non-parametric Estimation of the Frequency Response Functions}
\label{sec:non_param_estimation_frf}

To analyze the impact of irregular sampling in a general setting, we model the sampling instants as random variables drawn independently from a continuous probability density $p_T(t)$ supported on the finite interval $[0,T]$, i.e.,
\begin{equation}
t_k \sim p_T(t), \quad k=1,\dots,N.
\label{eq:sampling_distribution}
\end{equation}
This probabilistic formulation allows us to capture entire classes of sampling patterns in a unified way. 
As a result, the sampling distribution $p_T(t)$ can be treated as a design variable within the optimal experiment design framework developed in Section \ref{sec:sampling_design_d_optimal}.
While the probabilistic formulation is used as an analytical and design tool to characterize and optimize sampling performance in expectation, the final outcome of the proposed approach is a deterministic sampling scheme that can be implemented in practice.

We begin by estimating the FRF of the system at the excited frequencies.
At steady state, the response of the system \eqref{eq:system_dynamics} to the multisine input \eqref{eq:multisine} is a finite superposition of sinusoids at the excited frequencies:
\begin{align*}
\hspace{-0.01cm}
x(t)
  &= G(0)a_0
   + \frac{1}{2}\sum_{m=1}^M |G(j\omega_m)| a_m \bigl(
      e^{ j(\omega_m t + \phi_m + \angle G(j\omega_m)) } \nonumber\\
  &\qquad\qquad\qquad\qquad
      +\, e^{-j(\omega_m t + \phi_m + \angle G(j\omega_m)) }
     \bigr).
\label{eq:time_response_multisine_exponential}
\end{align*}
The steady-state output is a linear combination of complex exponentials at the excited frequencies \(\omega \in \Omega\),
each scaled in magnitude and shifted in phase by the corresponding FRF value.
Thanks to the discrete spectrum of the multisine excitation, the finite set of frequency response values \(\{G( j\omega)\}_{\omega \in \Omega}\) completely describes the steady-state response of the system to the multisine excitation.

Collecting these unknown FRF values into the parameter vector $\theta \in \mathbb{C}^{2M+1}$
\begin{equation*}
    \theta = 
    \begin{bmatrix}
        G(0),\,
        G(j\omega_1),\, G(-j\omega_1),\,
        \dots,\,
        G(j\omega_M),\, G(-j\omega_M)
    \end{bmatrix}^{\top},
    \label{eq:parameter_vector}
\end{equation*}
the measured output at time \(t_k\) satisfies the linear regression model $y(t_k) = \varphi^{\top}(t_k)\,\theta + e(t_k), \; k=1,\dots,N$, where the regressor vector collects the known contributions of each input frequency to the output at time $t_k$, i.e.,
\begin{equation*}
    \varphi(t_k) =
    \begin{bmatrix}
        a_0 \\[2pt]
        \tfrac{a_1}{2} e^{j(\omega_1 t_k+\phi_1)} \\
        \tfrac{a_1}{2} e^{-j(\omega_1 t_k+\phi_1)} \\
        \vdots \\
        \tfrac{a_M}{2} e^{j(\omega_M t_k+\phi_M)} \\
        \tfrac{a_M}{2} e^{-j(\omega_M t_k+\phi_M)}
    \end{bmatrix}
    \in \mathbb{C}^{2M+1}.
\end{equation*}

Stacking all $N$ measurements yields
\begin{equation}
    Y_N = \Phi_N\,\theta + E_N,
    \label{eq:linear_regression}
\end{equation}
where \(Y_N\in\mathbb{R}^N\), \(E_N\in\mathbb{R}^N\), and \(\Phi_N\in\mathbb{C}^{N\times (2M+1)}\).
To separate the sampling-dependent terms from the input parameters, we factorize
\begin{equation}
    \Phi_N = \Psi_N\,A,
    \label{eq:factorization_Phi}
\end{equation}
where 
\begin{equation*}
    A = \operatorname{diag}(a_0, \tfrac{a_1}{2}e^{j\phi_1},  \tfrac{a_1}{2}e^{-j\phi_1},  \dots, \allowbreak \tfrac{a_M}{2}e^{j\phi_M},  \tfrac{a_M}{2}e^{-j\phi_M})
\end{equation*}
contains the input amplitudes and phases, and \(\Psi_N\) is the generalized Vandermonde matrix \citep{lundengaard2017generalized}
\begin{equation}
    \Psi_N =
    \begin{bmatrix}
    1 & e^{j\omega_1 t_1} & e^{-j\omega_1 t_1} & \cdots & e^{j\omega_M t_1} & e^{-j\omega_M t_1} \\
    \vdots & \vdots & \vdots & \ddots & \vdots & \vdots \\
    1 & e^{j\omega_1 t_N} & e^{-j\omega_1 t_N} & \cdots & e^{j\omega_M t_N} & e^{-j\omega_M t_N}
    \end{bmatrix}.
    \label{eq:Psi_N}
\end{equation}
The least-squares estimator of the FRF values is then
\begin{equation}
    \hat{\theta}_{LS}
    = \argmin\limits_{\theta} \|Y_N - \Phi_N \theta\|_2^2
    = (\Phi_N^\hop \Phi_N)^{-1} \Phi_N^\hop Y_N.
    \label{eq:LS_solution}
\end{equation}
We have obtained a nonparametric least-squares estimator of the FRF at the excited frequencies under arbitrary sampling.
Importantly, the estimator depends explicitly on the sampling instants through \(\Psi_N\), which allows us to analyze its statistical properties as a function of $p_T(t)$. 
In particular, the expected information content and the variance of \(\hat{\theta}_{LS}\) can be directly related to the choice of sampling instants.  

\subsection{Statistical Properties of the Least-Squares Estimator}
\label{sec:properties_lse}
Understanding how the sampling distribution affects the statistical properties of the estimator is key to the contributions of this work.
In this section, we analyze well-posedness and accuracy of \(\hat{\theta}_{LS}\) under random sampling.
These results provide the foundation for designing D-optimal sampling schemes that reduce spectral leakage and maximize data informativeness under nonperiodic input.

Let $T_N = [t_1,\dots,t_N]^\top$ denote the vector of random sampling instants and let $\Phi_N=\Phi_N(T_N)$ be the associated regression matrix.
The estimator $\hat{\theta}_{LS}$ is well-posed when the FIM $\Phi_N^\hop \Phi_N$ is nonsingular.
Under uniform sampling, the well-posedness of the least-squares problem depends on the input excitation and on the sampling period \citep{gonzalez2024sampling}. 
In contrast, random sampling provides flexibility in the placement of measurements, which can be exploited in case of nonperiodic excitation.
Degenerate configurations that render $\Phi_N^\hop \Phi_N$ singular occur only under specific algebraic relations that are extremely unlikely under a random sampling scheme.
Lemma \ref{lem:well_posedness_ls_estimator} formalizes this intuition by showing that these degenerate cases occur with probability zero.

\vspace{0.5em}
\begin{lemma}[Well-posedness of the least-squares estimator]
\label{lem:well_posedness_ls_estimator}
Under Assumption \ref{assump:nonzero_amplitudes}, the least-squares estimator \eqref{eq:LS_solution} for the linear regression model \eqref{eq:linear_regression}  is well-posed almost surely provided that the number of measurements $N$ is greater than $2M$, and the sampling instants $\{t_k\}_{k=1}^N$ are drawn independently from the continuous distribution in \eqref{eq:sampling_distribution}.
\end{lemma}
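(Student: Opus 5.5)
Since $\Phi_N=\Psi_N A$ with $A$ diagonal, and every diagonal entry of $A$ is nonzero by Assumption~\ref{assump:nonzero_amplitudes} ($a_0\neq 0$ and $\tfrac{a_m}{2}e^{\pm j\phi_m}\neq 0$), the matrix $A$ is invertible. Hence $\Phi_N^\hop\Phi_N = A^\hop \Psi_N^\hop\Psi_N A$ is nonsingular if and only if $\Psi_N$ has full column rank $2M+1$. Because $N\ge 2M+1$, it suffices to show that the square submatrix $\Psi_{2M+1}$ formed by the first $2M+1$ rows, i.e.\ by $t_1,\dots,t_{2M+1}$, is nonsingular almost surely. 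Define
\[
D(t_1,\dots,t_{2M+1}) := \det \Psi_{2M+1}(t_1,\dots,t_{2M+1}).
\]
The goal is to show that $D\neq 0$ almost surely.

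The function $D$ is a finite sum of products of complex exponentials, so it is a real-analytic (indeed entire) function of $(t_1,\dots,t_{2M+1})\in\mathbb{R}^{2M+1}$. The zero set of a real-analytic function on a connected domain that is not identically zero has Lebesgue measure zero. The samples are independent with continuous densities $p_T$, so their joint law has density $\prod_k p_T(t_k)$ and is absolutely continuous with respect to Lebesgue measure on $\mathbb{R}^{2M+1}$. Consequently $\Pr(D=0)=0$, provided we can show that $D\not\equiv 0$.

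The main step is therefore to exhibit a single point where $D\neq 0$. The frequencies $0,\pm\omega_1,\dots,\pm\omega_M$ are $2M+1$ distinct reals; I assume the $\omega_m$ are distinct and positive, which is implicit in the parametrization. The plan is to show that the functions $t\mapsto e^{j\nu_i t}$, for distinct $\nu_i$, are linearly independent over $\mathbb{C}$. A standard argument does this: take the Wronskian at $t=0$, which is the ordinary Vandermonde determinant $\prod_{i<l}(j\nu_l-j\nu_i)\neq 0$.

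Linear independence then gives points $t_1,\dots,t_{2M+1}$ at which the collocation matrix is nonsingular. Two routes lead there. The first is a standard lemma: for linearly independent functions $f_1,\dots,f_n$ there exist points with $\det[f_i(t_k)]\neq 0$, proved by induction on $n$ via Laplace expansion along the last row. The second is more direct: choose $t_k=k\delta$ for small $\delta>0$. Then $\Psi$ becomes an ordinary Vandermonde matrix in the nodes $z_i=e^{j\nu_i\delta}$, and these nodes are distinct as soon as $\delta$ is small enough that $|\nu_i-\nu_l|\delta<2\pi$ for all $i,l$.

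The second route is the cleanest, so I will use it. With $t_k=k\delta$ one gets $D=\prod_i z_i\prod_{i<l}(z_l-z_i)\neq 0$. One subtlety remains: these chosen points may lie outside the support $[0,T]$. This does not matter, because analyticity on all of $\mathbb{R}^{2M+1}$ already gives a measure-zero zero set. Alternatively, taking $\delta<T/(2M+1)$ places the points inside $[0,T]$ anyway. This completes the argument that $\Phi_N^\hop\Phi_N$ is invertible almost surely.
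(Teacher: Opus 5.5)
Your proof is correct and follows the same route as the paper: you factor out the invertible $A$, treat a $(2M+1)\times(2M+1)$ minor of $\Psi_N$ as an analytic function of the sampling instants, and combine the Lebesgue-null zero set with the absolute continuity of the joint law of the $t_k$. Your explicit witness $t_k=k\delta$ turns the minor into a nonvanishing Vandermonde determinant in the distinct nodes $e^{j\nu_i\delta}$, which supplies the step the paper only asserts (that the minor is not identically zero ``by the structure of'' $\Psi_N$), and you rightly note that this needs the frequencies $0,\pm\omega_1,\dots,\pm\omega_M$ to be pairwise distinct.
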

\begin{proof}
To ensure that the estimator is well-posed almost surely, \(\Phi_N^{\hop} \Phi_N \in \mathbb{C}^{(2M+1)\times(2M+1)}\) must be nonsingular almost surely, i.e., \(\Phi_N \in \mathbb{C}^{N \times (2M+1)}\) must have full column rank almost surely.
We consider the factorization in \eqref{eq:factorization_Phi}.
By Assumption \ref{assump:nonzero_amplitudes}, $A$ is nonsingular. Therefore, it suffices to show that $\Psi_N$ has full column rank almost surely.
Let $C = 2M+1$ denote the number of columns of $\Psi_N$.
Since $\Psi_N \in \mathbb{C}^{N\times C}$ is a rectangular matrix with $N\geq C$, it has full column rank almost surely if and only if there exists with probability one at least one $C\times C$ submatrix of $\Psi_N$ with nonzero determinant.
Let $\mathcal{I} = \{i_1, \dots, i_C\} \subset \{1, \dots, N\}$ denote any subset of $C$ distinct row indices and consider the corresponding square submatrix $\Psi_{\mathcal{I}}$.
Its determinant $f_{\mathcal{I}}(t_{i_1},\dots,t_{i_C})=\textnormal{det}(\Psi_{\mathcal{I}})$ is a complex-valued analytic function of the sampling instants $(t_{i_1},\dots,t_{i_C})$.
By the structure of \eqref{eq:Psi_N}, $f_{\mathcal{I}}(t_{i_1},\dots,t_{i_C})$ is not identically zero for all $(t_{i_1},\dots,t_{i_C}) \in[0,T]$.  
Since $f_{\mathcal{I}}$ is analytic and not identically zero, its zero set
\[
Z_{\mathcal{I}} = \{ (t_{i_1},\dots,t_{i_C}) \in [0,T] : f_{\mathcal{I}}(t_{i_1},\dots,t_{i_C}) = 0 \}
\]
has Lebesgue measure zero in $[0,T]$ \citep[Sec. 7.3]{krantz2001function}. 
Because the sampling instants are independently drawn from a continuous distribution, the probability that $(t_{i_1},\dots,t_{i_C})$ lies in $Z_{\mathcal{I}}$ is zero for any fixed $\mathcal{I}$.
This implies that \(\Psi_N\) has full column rank almost surely and thus $\Phi_N^\hop \Phi_N$ is nonsingular almost surely.
\end{proof}

With well-posedness guaranteed under any continuous sampling distribution, we are now justified in analyzing the statistical properties of the estimator and in treating the sampling distribution itself as a design variable.
We start by investigating how the choice of sampling distribution affects bias and covariance.

\vspace{0.5em}
\begin{lemma}[Unbiasedness of the least-squares estimator]
\label{lem:unbiased_random}
Consider the least-squares estimator \eqref{eq:LS_solution} for the linear regression model \eqref{eq:linear_regression}.
Suppose the sampling instants $\{t_k\}_{k=1}^N$ are drawn independently from the continuous distribution in \eqref{eq:sampling_distribution}, and that Assumptions~\ref{assump:nonzero_amplitudes} to \ref{assump:noise} hold.
If $N>2M$, then $\hat{\theta}_{LS}$ is an unbiased estimator of the FRF at the excitation frequencies $\omega \in \Omega$, i.e.,
$\mathbb{E}\{\hat{\theta}_{LS}\} = \theta$.
\end{lemma}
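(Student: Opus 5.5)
The plan is to condition on the sampling instants and then average over them with the tower property. By Lemma~\ref{lem:well_posedness_ls_estimator}, since $N>2M$ and the instants are drawn independently from a continuous density, $\Phi_N^\hop\Phi_N$ is nonsingular almost surely. Hence $\hat\theta_{LS}$ is defined outside a null set of $T_N$, which is enough for expectations.

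On that full-measure event, we use Assumption~\ref{assump:input_steady_state}. Under it, $x(t_k)=\varphi^\top(t_k)\theta$ holds exactly for every $t_k$, so \eqref{eq:linear_regression} holds with no leakage or transient term. Substituting into \eqref{eq:LS_solution} gives
\begin{equation*}
\hat\theta_{LS}=(\Phi_N^\hop\Phi_N)^{-1}\Phi_N^\hop(\Phi_N\theta+E_N)=\theta+(\Phi_N^\hop\Phi_N)^{-1}\Phi_N^\hop E_N .
\end{equation*}
So the estimation error is a linear function of the noise whose coefficient matrix depends only on $T_N$. The input is deterministic given $T_N$ through $A$, which is nonsingular by Assumption~\ref{assump:nonzero_amplitudes}.

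Next we take the conditional expectation given $T_N$. By Assumption~\ref{assump:noise}, $E_N$ is independent of $T_N$ and of the input samples and has zero mean, so $\mathbb{E}\{E_N\mid T_N\}=0$. This yields $\mathbb{E}\{\hat\theta_{LS}\mid T_N\}=\theta$ almost surely. The tower property then gives $\mathbb{E}\{\hat\theta_{LS}\}=\mathbb{E}\{\mathbb{E}\{\hat\theta_{LS}\mid T_N\}\}=\theta$.

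The main obstacle is justifying the unconditional expectation, since $(\Phi_N^\hop\Phi_N)^{-1}$ may not be integrable when the instants approach the degenerate set. I would handle this by working only with the conditional mean: for almost every $T_N$, the error $(\Phi_N^\hop\Phi_N)^{-1}\Phi_N^\hop E_N$ is a fixed matrix times an integrable vector, so it is conditionally integrable with conditional mean zero. Because the conditional mean equals the constant $\theta$, the outer expectation is trivially finite. Unbiasedness can therefore be stated as $\mathbb{E}\{\hat\theta_{LS}\mid T_N\}=\theta$ a.s., and interpreted unconditionally through iterated expectation. If a stricter notion is required, one would additionally assume that $\mathbb{E}\|\hat\theta_{LS}\|$ is finite; I would flag this as the delicate point rather than try to prove integrability of the inverse.
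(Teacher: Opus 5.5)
Your proposal follows the paper's proof step for step: almost-sure invertibility from Lemma~\ref{lem:well_posedness_ls_estimator}, the error identity $\hat{\theta}_{LS}-\theta=(\Phi_N^\hop\Phi_N)^{-1}\Phi_N^\hop E_N$, $\mathbb{E}\{\hat{\theta}_{LS}\mid T_N\}=\theta$ from Assumption~\ref{assump:noise}, and then iterated expectations. Your integrability caveat, which the paper passes over, is well founded, but two points need correcting: finiteness of $\mathbb{E}\{\mathbb{E}\{\hat{\theta}_{LS}\mid T_N\}\}$ does not by itself license the tower property (that requires $\mathbb{E}\|\hat{\theta}_{LS}\|_2<\infty$), and at $N=2M+1$ with $\sigma>0$ this integrability genuinely fails, because the smallest singular value of the square matrix $\Psi_N$ is at most a constant times $|t_1-t_2|$ while $\mathbb{E}\{|t_1-t_2|^{-1}\}=\infty$, so in that case your conditional statement $\mathbb{E}\{\hat{\theta}_{LS}\mid T_N\}=\theta$ is the only rigorous form of the lemma.
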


\begin{proof}
By Lemma~\ref{lem:well_posedness_ls_estimator}, the matrix $\Phi_N^\hop\Phi_N$ is invertible with probability one, so the estimator is almost surely given by
\[
\hat{\theta}_{LS}
= (\Phi_N^\hop\Phi_N)^{-1}\Phi_N^\hop Y_N
= \theta + (\Phi_N^\hop\Phi_N)^{-1}\Phi_N^\hop E_N.
\]
Conditioning on a realization of the random sampling vector $T_N$, \eqref{eq:linear_regression} becomes deterministic in $\Phi_N$.
Taking the conditional expectation with respect to the noise and using Assumption~\ref{assump:noise}, we obtain
\[
\mathbb{E}\{\hat{\theta}_{LS} \mid T_N\} = \theta + (\Phi_N^\hop \Phi_N)^{-1} \Phi_N^\hop \, \mathbb{E}\{E_N \mid T_N\}=\theta.
\]
Finally, applying the law of iterated expectations yields
\[
\mathbb{E}\{\hat{\theta}_{LS}\} = \mathbb{E}_{T_N}\{\mathbb{E}\{\hat{\theta}_{LS} \mid T_N\}\} = \mathbb{E}_{T_N}\{\theta\} = \theta,
\]
proving that the estimator is unbiased.
\end{proof}

We now analyze the covariance of the estimate under random sampling.
Conditioned on a realization of the random sampling vector $T_N$, the estimation error can be written as $\hat{\theta}_{LS}-\theta = (\Phi_N^\hop\Phi_N)^{-1}\Phi_N^\hop E_N$.
Using Assumption \ref{assump:noise} and the fact that $\mathbb{E}\{\hat{\theta}_{LS}\mid T_N\}=\theta$, the law of total covariance gives
\begin{equation*}
\hspace{-0.01cm}
    \operatorname{Cov}\{\hat{\theta}_{LS}\}
= \mathbb{E}_{T_N}\big[\operatorname{Cov}\{\hat{\theta}_{LS}\mid T_N\}\big]
= \sigma^2 \, \mathbb{E}_{T_N} \big[(\Phi_N^\hop\Phi_N)^{-1}\big].
\label{eq:covariance_Phi}
\end{equation*}
Exploiting the factorization in \eqref{eq:factorization_Phi}, we can separate the effects of the input and the sampling pattern as
\begin{equation}
\operatorname{Cov}\{\hat{\theta}_{LS}\} = \sigma^2 \, A^{-1} \, \mathbb{E}_{T_N}\big[(\Psi_N^\hop\Psi_N)^{-1}\big] \, A^{-\hop}.
\label{eq:covariance_psi}
\end{equation}
Computing the expectation analytically is generally intractable due to the nonlinear dependence on the sampling instants.
To obtain a finite-sample characterization of the estimator uncertainty, we first derive a high-probability bound on \((\Psi_N^\hop\Psi_N)^{-1}\), which serves as an intermediate result for an approximate bound on the covariance.

\vspace{0.5em}
\begin{lemma}[High-probability bound]
\label{lem:high_prob_bound_cov}
    Consider $R_N = \Psi_N^\hop \Psi_N$ and $R_0=\mathbb{E}[R_N]$.
    Suppose that the sampling instants $\{t_k\}_{k=1}^N$ are drawn independently from the continuous distribution in \eqref{eq:sampling_distribution}. Choose $t \in (0, \lambda_{min}(R_0))$ such that
    \begin{equation*}
        (2M+1)\exp\!\left(-\frac{t^2/2}{v + Rt/3}\right) = \delta,
        \label{eq:t_delta_relation}
    \end{equation*}
    where $R \le 2(2M+1)$ and $v \le 4N(2M+1)^2$.
    Then, with probability at least $1 - \delta$, $R_N$ satisfies
    \begin{equation}
        R_N^{-1}
        \preceq \frac{1}{\lambda_{\min}(R_0) - t(\delta)}I_{2M+1}.
        \label{eq:Rn_bound}
    \end{equation}
\end{lemma}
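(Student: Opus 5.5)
The plan is to apply the matrix Bernstein inequality for sums of independent, zero-mean, bounded Hermitian matrices to the deviation $R_N - R_0$, and then turn the resulting spectral-norm bound into a lower bound on $\lambda_{\min}(R_N)$ via Weyl's inequality.

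\emph{Decomposition.} Write $R_N = \sum_{k=1}^N \psi(t_k)\psi(t_k)^\hop$, where $\psi(t_k)^\hop$ is the $k$-th row of $\Psi_N$, so $\psi(t_k)\in\mathbb{C}^{2M+1}$ has unit-modulus entries. Define $X_k = \psi(t_k)\psi(t_k)^\hop - \mathbb{E}[\psi(t_k)\psi(t_k)^\hop]$. Since the $t_k$ are i.i.d. from $p_T$, the $X_k$ are independent, Hermitian and zero-mean, and $R_N - R_0 = \sum_k X_k$.

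\emph{Bernstein parameters.} Because every entry of $\psi$ has modulus one, $\|\psi(t_k)\psi(t_k)^\hop\|_2 = \|\psi(t_k)\|_2^2 = 2M+1$. The expectation has the same norm bound by Jensen's inequality. The triangle inequality then gives $\|X_k\|_2 \le 2(2M+1) =: R$.

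For the variance, use $\mathbb{E}[X_k^2] \preceq \mathbb{E}[(\psi\psi^\hop)^2] = (2M+1)\,\mathbb{E}[\psi\psi^\hop]$, so $\|\mathbb{E}[X_k^2]\|_2 \le (2M+1)^2$. The cruder bound $\|X_k^2\|_2\le R^2 = 4(2M+1)^2$, summed over $k$, gives $v=\|\sum_k \mathbb{E}[X_k^2]\|_2 \le 4N(2M+1)^2$. This matches the constant in the statement, so that is the bound I will record.

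\emph{Concentration.} Matrix Bernstein (Tropp) in its one-sided lower-tail form, applied to $\lambda_{\max}(-\sum_k X_k)$, gives
\[
\Pr\{\lambda_{\min}(R_N - R_0) \le -t\} \le (2M+1)\exp\!\left(-\frac{t^2/2}{v+Rt/3}\right) = \delta .
\]
The two-sided version would cost only a factor $2$. The right-hand side is strictly decreasing in $t$, so $t(\delta)$ is well defined. Weyl's inequality, $\lambda_{\min}(R_N)\ge \lambda_{\min}(R_0)+\lambda_{\min}(R_N-R_0)$, then yields $\lambda_{\min}(R_N) \ge \lambda_{\min}(R_0)-t(\delta) > 0$ with probability at least $1-\delta$, using $t<\lambda_{\min}(R_0)$. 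Hence $R_N$ is invertible on this event and $R_N^{-1}\preceq (\lambda_{\min}(R_0)-t(\delta))^{-1}I_{2M+1}$, which is \eqref{eq:Rn_bound}.

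\emph{Main obstacle.} The delicate point is applying Bernstein correctly in the complex Hermitian setting: the dimension factor is $2M+1$, not doubled by a real embedding, and the one-sided form gives exactly the stated prefactor. I will invoke Tropp's result directly for Hermitian matrices, which holds over $\mathbb{C}$. The remaining constants only need to be upper bounds, as the statement's ``$\le$'' allows.
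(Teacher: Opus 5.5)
Your proposal is correct and follows the paper's own argument: the same decomposition $R_N - R_0 = \sum_k X_k$, the same constants $R \le 2(2M+1)$ and $v \le 4N(2M+1)^2$, and Tropp's matrix Bernstein inequality. You make two small improvements. You use the one-sided lower-tail form (Tropp's Th.~1.4 applied to $-X_k$), which is exactly what justifies the prefactor $2M+1$; the paper's two-sided bound on $\|R_N - R_0\|$ would strictly need $2(2M+1)$. You then conclude via Weyl's inequality, whereas the paper inverts the Loewner order on $R_0 - tI_{2M+1}$.
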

\begin{proof}
    Reported in Appendix \ref{app:proof_lemma_cov}.
\end{proof}
To translate the high-probability bound on $R_N^{-1}$ into a bound on the covariance, 
let $\mathcal{G}$ denote the event $\mathcal{G} = \{\|R_N - R_0\| \le t\}$ and $\mathcal{G}^c$ its complement. 
From \eqref{eq:matrix_bernstein_explicit}, $\Pr(\mathcal{G}) \ge 1 - \delta$. 
By the law of total expectation,
\[
\mathbb{E}[R_N^{-1}] 
= \mathbb{E}[R_N^{-1} \mid \mathcal{G}]\,\Pr(\mathcal{G})
+ \mathbb{E}[R_N^{-1} \mid \mathcal{G}^c]\,\Pr(\mathcal{G}^c).
\]
On $\mathcal{G}$, the exact bound \eqref{eq:Rn_bound} holds. 
On $\mathcal{G}^c$, $\|R_N^{-1}\|$ is finite almost surely since $R_N$ is positive definite almost surely due to Lemma~\ref{lem:well_posedness_ls_estimator}. 
Hence, for small $\delta$, the tail contribution can be neglected, yielding the approximation
\begin{align}
\operatorname{Cov}\{\hat{\theta}_{LS}\} 
& \approx \sigma^2 \, A^{-1} \, (\mathbb{E}[R_N^{-1}|\mathcal{G}]) A^{-\hop} \nonumber \\
& \preceq \sigma^2 \, A^{-1} \, \frac{1}{\lambda_{\min}(R_0) - t} A^{-\hop}
,
\label{eq:covariance_bound}
\end{align}
which relies on the event $\mathcal{G}$ occurring with probability $1-\delta$ and becomes more accurate as $\delta \to 0$.

Although the bound \eqref{eq:covariance_bound} is an approximation that neglects the contribution of rare tail events, it still provides useful insight into how the sampling distribution affects estimation accuracy.
In particular, \eqref{eq:covariance_bound} depends explicitly on $R_0 = \mathbb{E}[\Psi_N^\hop \Psi_N]$, whose diagonal entries are equal to $N$, regardless of the sampling scheme.
The off-diagonal terms of $R_0$ capture the correlation between regressors, which degrades numerical conditioning and estimation accuracy.
For two distinct excitation frequencies $\omega_p$ and $\omega_q$ at sampling time $t_k$, the corresponding off-diagonal term is
\begin{align}
    [R_0(p_T)]_{pq} = N \int_0^T e^{j(\omega_p - \omega_q) t} p_T(t)\, dt.
\label{eq:diagonal_term_R0}
\end{align}

These off-diagonal terms directly affect the expected FIM and, consequently, the expected informativeness of the identification experiments. 
Under uniform sampling, the FIM is diagonal for periodic multisine excitation, i.e., when all excited frequencies are integer multiples of $2\pi/T$, because the regressors are orthogonal over the observation window.
For nonperiodic excitation, uniform sampling produces a non-diagonal FIM due to spectral leakage \citep{schoukens2006analysis, pintelon2012system, gonzalez2025finite}.
To mitigate this effect under random sampling, we can select a sampling distribution $p_T(t)$ that suppresses the off-diagonal terms of $R_0$.
Since $\mathbb{E}[\Phi_N^\hop\Phi_N]=A^\hop R_0A$, this gives an expected FIM that is diagonal.
In other words, an optimized design for $p_T(t)$ can reduce spectral leakage in expectation, with the expected informativeness of the experiment directly controlled through $R_0$.
When $p_T(t)$ is designed in this way, $R_0 = N I_{2M+1}$ and the covariance bound \eqref{eq:covariance_bound} simplifies to 
\begin{equation}
\operatorname{Cov}\{\hat{\theta}_{LS}\} 
\preceq A^{-1}  \frac{\sigma^2}{N - t(\delta)} \, A^{-\hop},
\end{equation}
which approximate the classical $1/N$ variance decay \citep{lennart1999system}.

\section{Sampling Design via D-Optimality}
\label{sec:sampling_design_d_optimal}
Maximizing the determinant of the FIM $\mathbb{E}[\Phi_N^\hop \Phi_N] = A^\hop R_0 A$ corresponds to maximizing the expected information content of the identification experiment.
Since the input matrix $A$ is fixed, this is equivalent to maximize $\det(R_0)$. 
Since $R_0$ is Hermitian with fixed diagonal entries, Hadamard’s inequality \citep[Th.~7.8.1]{horn2012matrix} implies that $\det(R_0)$ is maximized when the off-diagonal entries are minimized. 
These off-diagonal terms \eqref{eq:diagonal_term_R0} correspond to spectral leakage, so suppressing leakage is equivalent to maximize data informativeness. 
This leads naturally to a D-optimal design criterion, where $p_T(t)$ is chosen to maximize $\det(R_0(p_T))$.

Formally, we rewrite
\begin{equation}
    R_0(p_T) = N \int_0^T \psi(t)\psi^\hop(t) \, p_T(t)\, dt,
    \label{eq:R0_pT}
\end{equation}
where 
\begin{equation}
    \psi(t) = 
    \begin{bmatrix} 
    1 & e^{-j\omega_1 t} & e^{j\omega_1 t} & \dots & e^{-j\omega_M t} & e^{j\omega_M t} 
    \end{bmatrix}^\top.
    \label{eq:psi_tk}
    \nonumber
\end{equation}
Since minimizing $-\log \det(\cdot)$ is equivalent to maximizing $\det(\cdot)$ and is more convenient for analysis, the resulting D-optimal sampling design problem
\begin{align}
\label{eq:D_opt_problem_precise}
\min_{p_T} \quad & - \log \det \big( R_0(p_T) \big) \\
\text{s.t.} \quad 
& p_T(t) \ge 0 \quad\text{for } t\in[0,T], \nonumber\\
& \int_0^T p_T(t)\, dt = 1. \nonumber
\end{align}
After a standard convexity analysis, we note that the feasible set of probability densities on $[0,T]$ is convex and that $-\log\det(R_0(p_T))$ is convex on this set.  
Hence, \eqref{eq:D_opt_problem_precise} is a convex optimization problem and admits a solution \citep{boyd2004convex}.  
In summary, the D-optimal sampling distribution maximizes the information content of the experiment and, by construction, mitigates spectral leakage, leading to improved estimation accuracy.

\subsection{From Optimal Density to Deterministic Sampling Instants}

Our ultimate goal is not the optimal probability density, but a deterministic sequence of sampling instants implementable in practice.  
To this end, we consider a discretization of the time interval $[0,T]$ with a fine grid $\{t^{(1)},\dots,t^{(L)}\}$ and we employ an iterative greedy algorithm.
Let $R_k$ denote the information matrix after $k$ samples, i.e., $R_k = \sum_{i=1}^k \psi(t_i)\psi^\hop(t_i)$.
Adding a candidate point $t$ gives the rank-one update $R_{k+1} = R_k + \psi(t)\psi^\hop(t)$, with the corresponding log-determinant increase
\begin{equation*}
\begin{aligned}
    \Delta_k(t) &= \log\det(R_{k+1}) - \log\det(R_k) 
    = \log(1 + s_k(t)),
\end{aligned}
\label{eq:log_det_increase}
\end{equation*}
where $s_k(t) := \psi^\hop(t) R_k^{-1}\psi(t)$. 
By selecting the next sampling instant as
\[
t_{k+1} \in \argmax_{t \in \{t^{(1)},\dots,t^{(L)}\}} s_k(t),
\]
each new point is chosen as the one that provides the largest incremental contribution to the information content of the identification experiment.
This procedure provides a practical way to approximate the structural properties of the D-optimal sampling distribution derived from the probabilistic formulation, while yielding a deterministic and implementable sampling sequence. 
In particular, the grid resolution $T/L$ controls how closely the discrete design matches the optimal density: finer grids enable better approximation of the desired sampling scheme, whereas coarser grids enforce larger inter-sample time.
Consequently, the proposed approach approximates the information-theoretic optimal design while explicitly bounding inter-sample time.

\section{Simulation Results}
\label{sec:simulation_results}

To validate the proposed sampling strategy in a system identification setting, we consider the second-order system
\begin{equation}
    G(\mathrm{p}) = \frac{2}{\mathrm{p}^2 + 2 \mathrm{p} + 1},
\end{equation}
excited with a fixed nonperiodic multisine with frequencies \(\Omega=\{0, \pm 3\sqrt{2}, \allowbreak \pm 3\pi, \pm 12, \pm 12.3\}\) over an observation window of length \(T = 5\,\mathrm{s}\). 
The input is kept fixed across all experiments so that differences in identification performance can be attributed solely to the sampling strategy.
We vary the number of measurements $N$ taken in the interval and compare two sampling strategies: uniform sampling and the D-optimal deterministic sequence designed as detailed in Section~\ref{sec:sampling_design_d_optimal}.
For each $N$, we construct the least-squares FRF estimator described in Section~\ref{sec:non_param_estimation_frf} and we perform \(500\) Monte Carlo simulations. 

Figure~\ref{fig:accuracy_opt_vs_uni} shows the sum of the variances of the estimated parameters $\mathrm{tr}(\operatorname{Cov}\{\hat \theta_{LS}\})$ for each value of $N$.  
The optimized sampling sequence consistently yields lower variance than uniform sampling, demonstrating improved estimation accuracy.
Notably, the results indicate that, for a given level of estimation accuracy, the optimized sampling scheme requires roughly half the number of samples compared to uniform sampling.  
Figure~\ref{fig:metrics_uni_vs_opt} compares the D-optimal and uniform sampling strategies in terms of the average information density $\det(\text{FIM}/N)$ over the Monte Carlo runs. 
The D-optimal sampling strategy consistently achieves higher information density.
This indicates that, on average, each measurement contributes more Fisher information under the D-optimal design, resulting in more informative and reliable parameter estimates with the same number of samples.
These results highlight the practical advantage of the D-optimal sampling design: it allows for a more efficient experiment by reducing measurement effort while maintaining or improving estimation quality.

\begin{figure}
    \centering
    \includegraphics[width=0.9\linewidth]{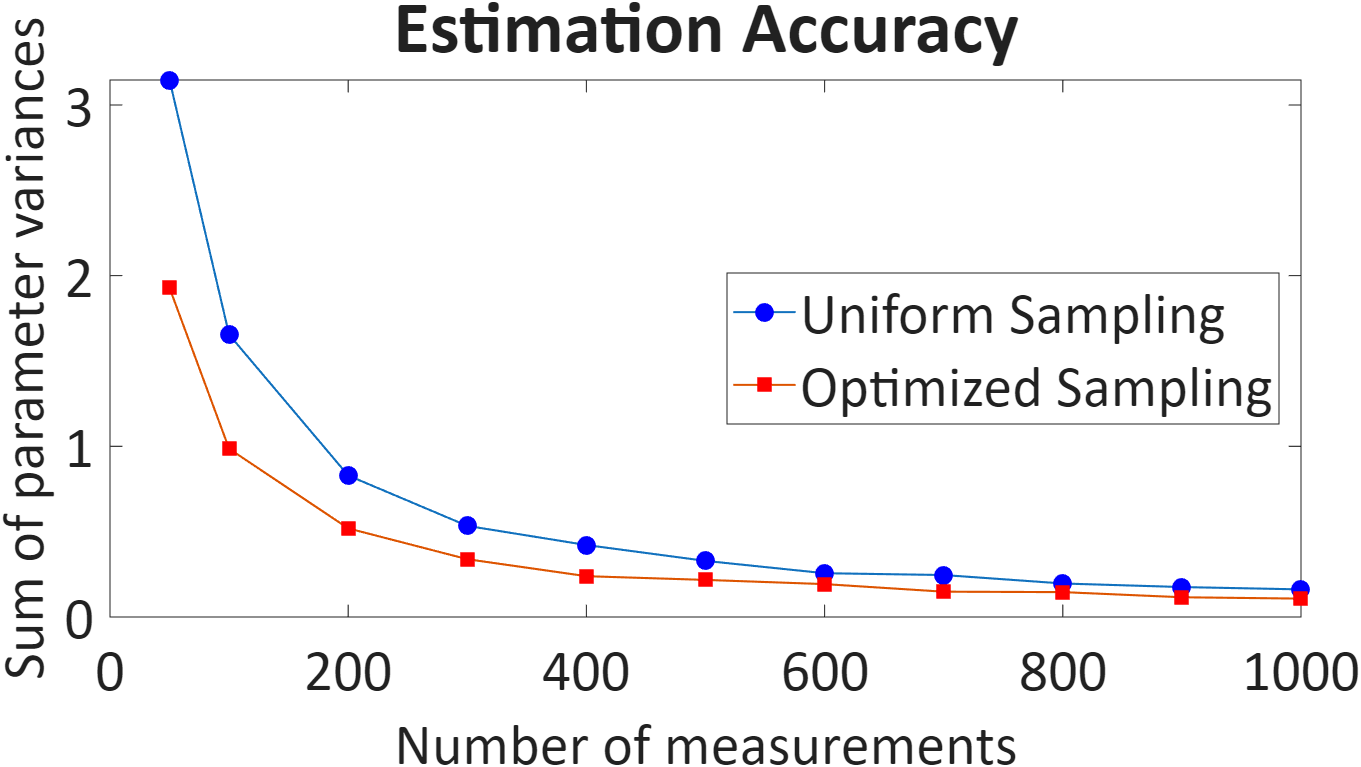}
    \caption{Estimation accuracy of the least-squares FRF estimator for uniform sampling and the D-optimal sequence.}

    \label{fig:accuracy_opt_vs_uni}
\end{figure}

\begin{figure}
    \centering
    \includegraphics[width=0.9\linewidth]{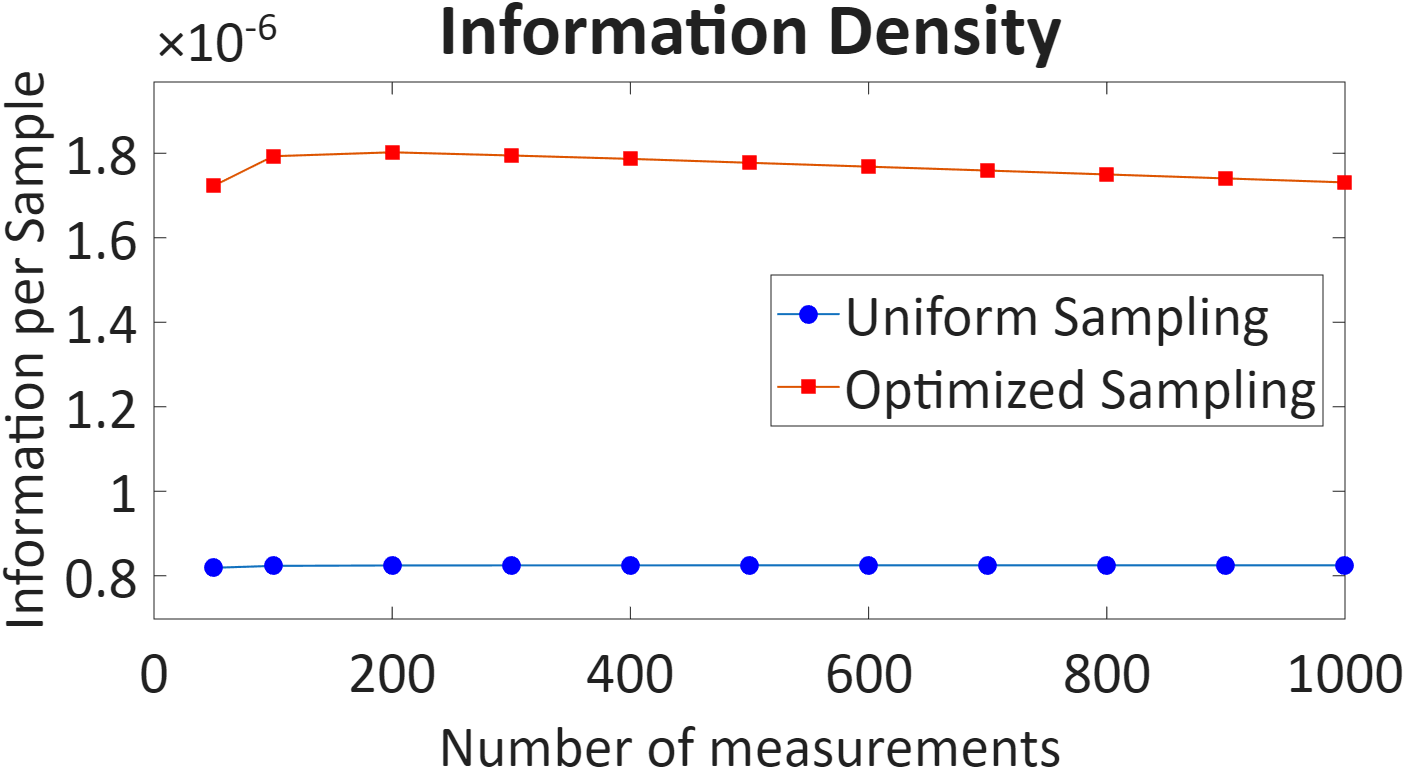}
    \caption{Information density of the least-squares FRF estimator for uniform sampling and the D-optimal sequence.}
    \label{fig:metrics_uni_vs_opt}
\end{figure}

\section{Conclusion}
\label{sec:conclusion}
This work has presented a framework for designing nonuniform sampling strategies in continuous-time system identification under nonperiodic multisine excitation.  
By treating the sampling distribution as a design variable and optimizing it via a D-optimal criterion, we extend classical experiment design principles to system identification with random sampling.  
The resulting theory highlights the connection between information maximization and the suppression of spectral leakage, and is complemented by a practical iterative algorithm for constructing deterministic sampling sequences that closely approximate the optimal design.  
Simulations on a second-order system illustrate that D-optimal sampling sequences yield more informative measurements and more accurate FRF estimates than uniform sampling.  
These results demonstrate that sampling design provides a powerful means to improve the statistical efficiency of continuous-time system identification when periodic excitation cannot be enforced.

\appendix
\section{Proof Lemma \ref{lem:high_prob_bound_cov}}
\begin{proof}
\label{app:proof_lemma_cov}
We rewrite $R_N = \Psi_N^\hop \Psi_N = \sum_{k=1}^N \psi_k \psi_k^\hop$, where 
\[ 
\psi_k = 
\begin{bmatrix} 
e^{j\beta_0 t_k} & e^{j\beta_1 t_k} &e^{j\beta_2 t_k} & \dots & e^{j\beta_{2M-1} t_k} & e^{j\beta_{2M} t_k} 
\end{bmatrix}^\top, 
\] 
and $\beta_0=0, \beta_1=-\omega_1, \beta_2=\omega_1, \dots, \beta_{2M-1}=-\omega_M, \beta_{2M}=\omega_M \allowbreak$.
By Lemma \ref{lem:well_posedness_ls_estimator}, the matrix $\Psi_N$ has full column rank almost surely.  
Consequently, $R_N$ is Hermitian positive definite almost surely. 
The vectors $\psi_k$ are independent random vectors with bounded Euclidean norm: 
\begin{equation*}
    \|\psi_k\psi_k^\hop\|=\|\psi_k\|_2^2 = \sum_{\ell=1}^{2M+1}|e^{j\beta_{\ell}t_k}|^2=2M+1.
\end{equation*}
We define the zero-mean matrices $X_k = \psi_k \psi_k^{\hop} - \mathbb{E}[\psi_k \psi_k^{\hop}]$, so that $R_N - R_0 = \sum_{k=1}^{N} X_k$.
The matrices $X_k$ are independent, Hermitian, and, using the triangle inequality for the spectral norm, they satisfy 
$$
R \coloneqq \|X_k\| \le \|\psi_k \psi_k^\hop\| + \|\mathbb{E}[\psi_k \psi_k^\hop]\| \le 2(2M+1).
$$
From the submultiplicativity of the spectral norm,
$
\|X_k^2\| \le \|X_k\|^2 \le 4(2M+1)^2.
$
Since \(X_k\) are independent,
\[
v \coloneqq \left\|\sum_{k=1}^N \mathbb{E}[X_k^2]\right\| \le \sum_{k=1}^N \|\mathbb{E}[X_k^2]\| \le N \cdot 4(2M+1)^2.
\]

By applying the Matrix Bernstein inequality to the sum of $X_k$ \citep[Th.1.4]{tropp2012user}, we obtain a probabilistic bound on the deviation of $R_N$ from its mean:
\begin{equation*}
    \Pr\big(\|R_N - R_0\| \ge t\big)
 \le (2M+1) \exp\!\left( - \frac{t^2/2}{v + Rt/3} \right).
\label{eq:matrix_bernstein}
\end{equation*}
Substituting $R = 2(2M+1)$ and $v = 4N(2M+1)^2$,
\begin{align}
\label{eq:matrix_bernstein_explicit}
&\Pr\big(\|R_N - R_0\| \ge t\big)\\ 
&\le (2M+1) \exp\!\Bigg( - \frac{t^2/2}{4 N (2M+1)^2 + 2 (2M+1)\, t/3} \Bigg). \nonumber
\end{align}
Since $R_N - R_0$ is Hermitian, its spectral norm coincides with the largest absolute value of its eigenvalues.
Hence, $\|R_N - R_0\| < t$ is equivalent to $-t I_{2M+1} \preceq R_N - R_0 \preceq t I_{2M+1}$.
It follows that $R_0-tI_{2M+1} \preceq R_N$ holds with probability $1-\delta$.
Since $t \in (0, \lambda_{min}(R_0))$, both matrices are positive definite. Since inversion reverses the Loewner order on positive definite matrices, $R_N^{-1} \preceq (R_0 - tI_{2M+1})^{-1}$ with probability $1-\delta$, which is equivalent to
\begin{equation*}
        R_N^{-1}
        \preceq \frac{1}{\lambda_{\min}(R_0) - t(\delta)}I_{2M+1}.
        \qedhere
    \end{equation*}
\end{proof}

\begin{thebibliography}{24}
\providecommand{\natexlab}[1]{#1}
\providecommand{\url}[1]{\texttt{#1}}
\providecommand{\urlprefix}{URL }
\expandafter\ifx\csname urlstyle\endcsname\relax
  \providecommand{\doi}[1]{doi:\discretionary{}{}{}#1}\else
  \providecommand{\doi}{doi:\discretionary{}{}{}\begingroup \urlstyle{rm}\Url}\fi

\bibitem[{{\AA}str\"om and Bernhardsson(2002)}]{astrom2002comparison}
{\AA}str\"om, K.J. and Bernhardsson, B.M. (2002).
\newblock Comparison of {Riemann} and {Lebesgue} sampling for first order stochastic systems.
\newblock In \emph{Proceedings of the 41st IEEE Conference on Decision and Control, 2002}, volume~2, 2011--2016. IEEE.

\bibitem[{Bombois et~al.(2011)Bombois, Gevers, Hildebrand, and Solari}]{bombois2011optimal}
Bombois, X., Gevers, M., Hildebrand, R., and Solari, G. (2011).
\newblock Optimal experiment design for open and closed-loop system identification.
\newblock \emph{Communications in Information and Systems}, 11(3), 197--224.

\bibitem[{Boyd and Vandenberghe(2004)}]{boyd2004convex}
Boyd, S. and Vandenberghe, L. (2004).
\newblock \emph{Convex Optimization}.
\newblock Cambridge University Press.

\bibitem[{Garnier et~al.(2008)Garnier, Wang, and Young}]{garnier2008direct}
Garnier, H., Wang, L., and Young, P.C. (2008).
\newblock Direct identification of continuous-time models from sampled data: Issues, basic solutions and relevance.
\newblock In \emph{Identification of continuous-time models from sampled data}, 1--29. Springer.

\bibitem[{Gonz{\'a}lez et~al.(2025{\natexlab{a}})Gonz{\'a}lez, Classens, Rojas, Oomen, and Hjalmarsson}]{gonzalez2025finite}
Gonz{\'a}lez, R.A., Classens, K., Rojas, C.R., Oomen, T., and Hjalmarsson, H. (2025{\natexlab{a}}).
\newblock Finite sample {MIMO} system identification with multisine excitation: Nonparametric, direct, and two-step parametric estimators.
\newblock \emph{arXiv preprint arXiv:2510.26929}.

\bibitem[{Gonz{\'a}lez et~al.(2025{\natexlab{b}})Gonz{\'a}lez, Classens, Rojas, Welsh, and Oomen}]{gonzalez2025identification}
Gonz{\'a}lez, R.A., Classens, K., Rojas, C.R., Welsh, J.S., and Oomen, T. (2025{\natexlab{b}}).
\newblock Identification of additive continuous-time systems in open and closed loop.
\newblock \emph{Automatica}, 173, Art. 112013.

\bibitem[{Gonz{\'a}lez et~al.(2021)Gonz{\'a}lez, Rojas, Pan, and Welsh}]{gonzalez2021consistent}
Gonz{\'a}lez, R.A., Rojas, C.R., Pan, S., and Welsh, J.S. (2021).
\newblock Consistent identification of continuous-time systems under multisine input signal excitation.
\newblock \emph{Automatica}, 133, Art. 109859.

\bibitem[{Gonz{\'a}lez et~al.(2024)Gonz{\'a}lez, van Haren, Oomen, and Rojas}]{gonzalez2024sampling}
Gonz{\'a}lez, R.A., van Haren, M., Oomen, T., and Rojas, C.R. (2024).
\newblock Sampling in parametric and nonparametric system identification: {Aliasing}, input conditions, and consistency.
\newblock \emph{IEEE Control Systems Letters}, 8, 2415--2420.

\bibitem[{Grauer and Boucher(2020)}]{grauer2020aircraft}
Grauer, J.A. and Boucher, M.J. (2020).
\newblock Aircraft system identification from multisine inputs and frequency responses.
\newblock \emph{Journal of Guidance, Control, and Dynamics}, 43(12), 2391--2398.

\bibitem[{Hjalmarsson(2005)}]{hjalmarsson2005experiment}
Hjalmarsson, H. (2005).
\newblock From experiment design to closed-loop control.
\newblock \emph{Automatica}, 41(3), 393--438.

\bibitem[{Horn and Johnson(2012)}]{horn2012matrix}
Horn, R.A. and Johnson, C.R. (2012).
\newblock \emph{Matrix Analysis}.
\newblock Cambridge University Press.

\bibitem[{Krantz(2001)}]{krantz2001function}
Krantz, S.G. (2001).
\newblock \emph{Function theory of several complex variables}, volume 340.
\newblock American Mathematical Soc.

\bibitem[{Ljung(1999)}]{lennart1999system}
Ljung, L. (1999).
\newblock System identification: Theory for the user.
\newblock \emph{PTR Prentice Hall, Upper Saddle River, NJ}.

\bibitem[{Lundeng{\aa}rd(2017)}]{lundengaard2017generalized}
Lundeng{\aa}rd, K. (2017).
\newblock \emph{Generalized Vandermonde matrices and determinants in electromagnetic compatibility}.
\newblock Ph.D. thesis, M{\"a}lardalen University.

\bibitem[{Miskowicz(2007)}]{miskowicz2007asymptotic}
Miskowicz, M. (2007).
\newblock Asymptotic effectiveness of the event-based sampling according to the integral criterion.
\newblock \emph{Sensors}, 7(1), 16--37.

\bibitem[{Morelli(2003)}]{morelli2003multiple}
Morelli, E.A. (2003).
\newblock Multiple input design for real-time parameter estimation in the frequency domain.
\newblock \emph{IFAC Proceedings Volumes}, 36(16), 639--644.

\bibitem[{Mu et~al.(2015)Mu, Guo, Wang, Yin, Xu, and Zheng}]{mu2015identification}
Mu, B., Guo, J., Wang, L.Y., Yin, G., Xu, L., and Zheng, W.X. (2015).
\newblock Identification of linear continuous-time systems under irregular and random output sampling.
\newblock \emph{Automatica}, 60, 100--114.

\bibitem[{Pintelon and Schoukens(2012)}]{pintelon2012system}
Pintelon, R. and Schoukens, J. (2012).
\newblock \emph{System identification: A frequency domain approach}.
\newblock John Wiley \& Sons.

\bibitem[{Pronzato(2008)}]{pronzato2008optimal}
Pronzato, L. (2008).
\newblock Optimal experimental design and some related control problems.
\newblock \emph{Automatica}, 44(2), 303--325.

\bibitem[{Pukelsheim(2006)}]{pukelsheim2006optimal}
Pukelsheim, F. (2006).
\newblock \emph{Optimal design of experiments}.
\newblock SIAM.

\bibitem[{Rao and Unbehauen(2006)}]{rao2006identification}
Rao, G.P. and Unbehauen, H. (2006).
\newblock Identification of continuous-time systems.
\newblock \emph{IEE Proceedings-Control Theory and Applications}, 153(2), 185--220.

\bibitem[{Schoukens et~al.(1994)Schoukens, Pintelon, and Guillaume}]{schoukens1994advantages}
Schoukens, J., Pintelon, R., and Guillaume, P. (1994).
\newblock On the advantages of periodic excitation in system identification.
\newblock \emph{IFAC Proceedings Volumes}, 27(8), 1115--1120.

\bibitem[{Schoukens et~al.(2006)Schoukens, Rolain, and Pintelon}]{schoukens2006analysis}
Schoukens, J., Rolain, Y., and Pintelon, R. (2006).
\newblock Analysis of windowing/leakage effects in frequency response function measurements.
\newblock \emph{Automatica}, 42(1), 27--38.

\bibitem[{Tropp(2012)}]{tropp2012user}
Tropp, J.A. (2012).
\newblock User-friendly tail bounds for sums of random matrices.
\newblock \emph{Foundations of Computational Mathematics}, 12(4), 389--434.

\end{thebibliography}
\end{document}